\title{On Uniform Equivalence of Epistemic Logic Programs}
\author[W.\ Faber, M.\ Morak, and S.\ Woltran]{
  WOLFGANG FABER$^1$, MICHAEL MORAK$^1$, and STEFAN WOLTRAN$^2$\\
  $^1$University of Klagenfurt, Austria\\
  $^2$TU Wien, Vienna, Austria\\
  \email{wolfgang.faber@aau.at, michael.morak@aau.at, woltran@dbai.tuwien.ac.at}
}
\newcommand{\nop}[1]{}
\newenvironment{changemargin}[2]{%
\list{}{\rightmargin#2\leftmargin#1
\parsep=0pt\topsep=0pt\partopsep=0pt}
\item[]}
{\endlist}
\newenvironment{indented}{\begin{changemargin}{1cm}{0cm}}{\end{changemargin}}
\newtheorem{theorem}{Theorem}
\newtheorem{proposition}[theorem]{Proposition}
\newtheorem{definition}[theorem]{Definition}
\newtheorem{example}[theorem]{Example}
\let\phi\varphi
\let\epsilon\varepsilon
\renewcommand{\models}{\vDash}
\newcommand{\calA}{\mathcal{A}}
\newcommand{\calC}{\mathcal{C}}
\newcommand{\calE}{\mathcal{E}}
\newcommand{\calI}{\mathcal{I}}
\newcommand{\calM}{\mathcal{M}}
\newcommand{\calR}{\mathcal{R}}
\newcommand{\calS}{\mathcal{S}}
\newcommand{\calU}{\mathcal{U}}
\newcommand{\NP}{\ensuremath{\textsc{NP}}}
\newcommand{\co}{\ensuremath{\textsc{co}}}
\newcommand{\SIGMA}[2]{\ensuremath{\Sigma_{\mathit{#1}}^{\mathit{#2}}}}
\newcommand{\DP}[2]{\ensuremath{D_{\mathit{#1}}^{\mathit{#2}}}}
\newcommand{\PI}[2]{\ensuremath{\Pi_{\mathit{#1}}^{\mathit{#2}}}}
\newcommand{\variables}[1]{{\mathbf{#1}}}
\newcommand{\mods}[1]{\mathit{mods}(#1)}
\newcommand{\answersets}[1]{\mathit{AS}(#1)}
\newcommand{\semods}[1]{\calS\calE(#1)}
\newcommand{\uemods}[1]{\calU\calE(#1)}
\newcommand{\cwvs}[1]{\mathit{cwv}(#1)}
\newcommand{\varsX}{\variables{X}}
\newcommand{\varsY}{\variables{Y}}
\newcommand{\varsZ}{\variables{Z}}
\newcommand{\relation}[1]{{\mathit{#1}}}
\newcommand{\eneg}{\mathbf{not}\,}
\newcommand{\sneg}{\mathbf{\sim}\,}
\newcommand{\body}[1]{{\mathit{B}(#1)}}
\newcommand{\pbody}[1]{{\mathit{B}^+(#1)}}
\newcommand{\head}[1]{{\mathit{H}(#1)}}
\begin{document}

\maketitle

\begin{abstract}
  Epistemic Logic Programs (ELPs) extend Answer Set Programming (ASP) with
epistemic negation and have received renewed interest in recent years. This led
to the development of new research and efficient solving systems for ELPs. In
practice, ELPs are often written in a modular way, where each module interacts
with other modules by accepting sets of facts as input, and passing on sets of
facts as output. An interesting question then presents itself: under which
conditions can such a module be replaced by another one without changing the
outcome, for any set of input facts? This problem is known as uniform
equivalence, and has been studied extensively for ASP. For ELPs, however, such
an investigation is, as of yet, missing. In this paper, we therefore propose a
characterization of uniform equivalence that can be directly applied to the
language of state-of-the-art ELP solvers. We also investigate the computational
complexity of deciding uniform equivalence for two ELPs, and show that it is on
the third level of the polynomial hierarchy.

\end{abstract}

\section{Introduction}\label{sec:introduction}

Epistemic Logic Programs (ELPs) \cite{aaai:Gelfond91,logcom:KahlWBGZ15,%
ai:ShenE16} add epistemic operators to the language of Answer Set Programming
(ASP) \cite{book:GebserKKS12,cacm:BrewkaET11,ki:SchaubW18}, a generic, fully
declarative logic programming language that allows for encoding problems such
that the resulting answers (called \emph{answer sets}) directly correspond to
solutions of the encoded problem. In ASP, negation is usually interpreted
according to the stable model semantics \cite{iclp:GelfondL88}, that is, as
negation-as-failure or default negation. Intuitively, a default negated atom
$\neg a$ is true if there is no justification for $a$ being true in the same
answer set. Hence, default negation is a ``local'' operator defined relative to
the answer set under consideration. ELPs (as defined in \cite{ai:ShenE16})
extend ASP with the epistemic negation operator $\eneg$. An epistemically
negated atom $\eneg a$ intuitively means that atom $a$ cannot be \emph{proven}
to be true, in the sense that it is not true in every answer set. Epistemic
negation is thus defined with respect to a set of answer sets, referred to as a
\emph{world view}. Deciding whether such a world view exists is
\SIGMA{P}{3}-complete in general \cite{ai:ShenE16}, whereas deciding answer set existence
for ASP can be done in \SIGMA{P}{2} \cite{amai:EiterG95}, one level lower on the
polynomial hierarchy.

Michael Gelfond \shortcite{aaai:Gelfond91,amai:Gelfond94} recognized epistemic
negation to be a useful construct for ASP early on and proceeded to introduce
the modal operators $\mathbf{K}$ (``known'' or ``provably true'') and
$\mathbf{M}$ (``possible'' or ``not provably false'') to add this feature to the
language. $\mathbf{K}a$ and $\mathbf{M}a$ correspond to $\neg \eneg a$ and
$\eneg \neg a$, respectively. Renewed interest in recent years has revealed
several flaws in the original semantics, and various new approaches (cf.\ e.g.\
\cite{lpnmr:Gelfond11,birthday:Truszczynski11,diss:Kahl14,ijcai:CerroHS15,%
ai:ShenE16}) were proposed. Also, several efficient and practical ELP solving
systems have been and continue to be developed \cite{logcom:KahlWBGZ15,%
ijcai:SonLKL17,ijcai:BichlerMW18}.

An interesting question in the context of ELPs is when two programs are
equivalent. Standard (sometimes also called ordinary or classical) equivalence
is simply defined as two programs having the same world views. However, as for
ASP but unlike for classical logic, this notion does not capture replaceability
of ELPs. In order to be able to capture when a program is replaceable by another
one, the context in which the replacement is done has to be taken into account.
The notion of strong equivalence captures replaceability in any context. Strong
equivalence is a well-studied topic in ASP (there the context can be any set of
ASP rules) with several useful applications \cite{tocl:LifschitzPV01,%
tplp:Turner03,iclp:CabalarPV07,jair:LinC07,jancl:EiterFPTW13}. An analogous
notion has also been studied more recently for ELPs (in that case, the context
can be any set of ELP rules) \cite{lpnmr:WangZ05,ijcai:CerroHS15,%
aaai:FaberMW19}.

For some applications, other notions of equivalence between these two extremes
(no context for standard equivalence, any context for strong equivalence) are
desirable. The most prominent of these intermediate equivalences is uniform
equivalence, where the context is restricted to facts. This notion had been
proposed for Datalog originally \cite{Sagiv88,Maher88} and has been studied for
ASP quite extensively as well \cite{iclp:EiterF03,lpnmr:EiterFTW04,%
aaai:EiterFTW05}. The notion of uniform equivalence is of course most suitable
when considering a full program that is applied to various scenarios, which are
represented as factual knowledge, or a program that serves as a fixed knowledge
base for an agent, which then uses it together with percepts represented as
facts. The concept is also useful for modular ASP programs
\cite{iclp:LifschitzT94,iclp:Oikarinen07,jair:JanhunenOTW09}, where sub-programs
interact with each other by accepting a set of input facts and returning a set
of output facts.

While uniform equivalence has been widely studied for ASP, such an investigation
is, to the best of the authors' knowledge, still lacking for ELPs.

\begin{example}\label{ex:running1}
  From \cite{aaai:FaberMW19}, we take the example of formulating the well-known
  Closed World Assumption via ELP rules. Two formulations of the CWA have been
  proposed in this context. In \cite{aaai:Gelfond91}, a rule for CWA is proposed
  that, in the language of ELPs, can be formulated as \[ p' \gets \neg \eneg
  \neg p. \] Intuitively, this says that $p'$ (meaning the negation of $p$)
  shall be true if there is no possible world where $p$ is true. In
  \cite{ai:ShenE16}, a different rule is proposed: \[ p' \gets \eneg p. \] Here,
  intuitively, $p'$ shall be true if there is a possible world where $p$ is
  false. While the two formulations are equivalent in that they share the same
  world views, in \cite{aaai:FaberMW19}, it was shown that they are, however,
  not strongly equivalent. However, this does not tell us anything about the
  uniform equivalence of these two rules.
\end{example}

In order to analyze cases like the one presented in Example~\ref{ex:running1}
above, it is the aim of this paper to study uniform equivalence for ELPs, that
is, the question of whether, given two ELPs $\Pi_1$ and $\Pi_2$, for any set of
facts $D$, the combined programs $\Pi_1 \cup D$ and $\Pi_2 \cup D$ have the same
world views. According to \cite{aaai:FaberMW19}, two versions of (ordinary)
equivalence between ELPs can be defined: one where all candidate world views are
equal, and one where only the world views (that is, candidate world views that
minimize the number of assumptions) are equal. In \cite{aaai:FaberMW19}, two
versions of strong equivalence, relative to these ordinary equivalence notions,
are defined, and then subsequently shown to coincide. We will follow the same
approach. Interestingly, we will see that for uniform equivalence, the two
versions do not coincide.

\paragraph{Contributions.} The main contributions of this paper are the
following:
\begin{itemize}
  \item We formally define two versions of uniform equivalence for ELPs (based
    on the input language of today's ELP solvers) that appropriately extend
    uniform equivalence for ASP, based on existing notions of equivalence for
    ELPs, as used in \cite{aaai:FaberMW19}.

  \item We provide an analysis of the two different notions of uniform
    equivalence for ELPs and characterize their relationship. Furthermore, a
    model-theoretic characterization is offered, based on a so-called
    UE-function, in the same vein as the SE-function was introduced in
    \cite{aaai:FaberMW19} to characterize strong equivalence.
  \item We then show that testing uniform equivalence of two ELPs is
    \PI{3}{P}-complete, that is, the complexity of this test jumps up one level
    on the polynomial hierarchy compared to ASP, and hence is much harder than
    testing strong equivalence, which for ELPs is only \co\NP-complete
    \cite{aaai:FaberMW19}.
\end{itemize}

\paragraph{Organization.} The remainder of the paper is structured as follows.
Section~\ref{sec:preliminaries} gives an overview of the relevant definitions
needed in the main sections of the paper, including the language of ASP, ELPs,
and the notions of strong and uniform equivalence for the former.
Section~\ref{sec:uniformequivalence} defines two different notions of uniform
equivalence for ELPs, shows that, in contrast to strong equivalence, these
notions do not coincide, and finally offers a model-theoretic characterization
of these notions of uniform equivalence, called the UE-function. Following this
characterization, we investigate the computational complexity of deciding
uniform equivalence in Section~\ref{sec:complexity}. We then offer some
concluding remarks in Section~\ref{sec:conclusions}.

\section{Preliminaries}\label{sec:preliminaries}

\paragraph{Answer Set Programming (ASP).} A \emph{ground logic program} with
nested negation (also called answer set program, ASP program, or, simply, logic
program) is a pair $\Pi = (\calA, \calR)$, where $\calA$ is a set of
propositional (i.e.\ ground) atoms and $\calR$ is a finite set of rules of the
form
\begin{equation}\label{eq:rule}
  a_1\vee \cdots \vee a_l \leftarrow a_{l+1}, \ldots, a_m, \neg \ell_1, \ldots,
  \neg \ell_n;
\end{equation}
where the comma symbol stands for conjunction, $0 \leq l \leq m$, $0 \leq n$,
$a_i \in \calA$ for all $1 \leq i \leq m$, and each $\ell_i$ is a
\emph{literal}, that is, either an atom $a$ or its (default) negation $\neg a$
for any atom $a \in \calA$. Note that, therefore, doubly negated atoms may
occur. We will sometimes refer to such rules as \emph{standard rules}.  Each
rule $r \in \calR$ of form~(\ref{eq:rule}) consists of a \emph{head} $\head{r} =
\{ a_1,\ldots,a_l \}$ and a \emph{body} $\body{r} = \{a_{l+1},\ldots,a_m, \neg
\ell_1, \ldots, \neg \ell_n \}$. We denote the \emph{positive} body by
$\pbody{r} = \{ a_{l+1}, \ldots, a_m \}$. A rule where $l = 1$, $m = l$, and $n
= 0$ is called a \emph{fact}.

An \emph{interpretation} $I$ (over $\calA$) is a set of atoms, that is, $I \subseteq \calA$.  A
literal $\ell$ is true in an interpretation $I \subseteq \calA$, denoted $I
\models \ell$, if $a \in I$ and $\ell = a$, or if $a \not\in I$ and $\ell = \neg
a$; otherwise $\ell$ is false in $I$, denoted $I \not\models \ell$. Finally, for
some literal $\ell$, we define that $I \models \neg \ell$ if $I \not\models
\ell$. This notation naturally extends to sets of literals. An interpretation
$M$ is called a \emph{model} of $r$, denoted $M \models r$, if, whenever $M
\models \body{r}$, it holds that $M \models \head{r}$. We denote the set of
models of $r$ by $\mods{r}$; the models of a logic program $\Pi= (\calA,\calR)$
are given by $\mods{\Pi} = \bigcap_{r \in \calR} \mods{r}$. We also write
$I\models r$ (resp.\ $I\models \Pi$) if $I\in\mods{r}$ (resp.\
$I\in\mods{\Pi}$).

The GL-reduct $\Pi^I$ of a 
logic program $\Pi = (\calA, \calR)$ with
respect to an interpretation $I$ is the program $
(\calA, \calR^I)$,
where $\calR^I = \{ \head{r} \leftarrow \pbody{r} \mid r \in \calR, \forall \neg
\ell \in \body{r} : I \models \neg \ell \}$.

\begin{definition}\label{def:answerset}
  \cite{iclp:GelfondL88,ngc:GelfondL91,amai:LifschitzTT99} $M \subseteq \calA$
  is an \emph{answer set} of a logic program $\Pi$ if (1) $M \in \mods{\Pi}$ and
  (2) there is no subset $M' \subset M$ such that $M' \in \mods{\Pi^M}$.
\end{definition}

The set of answer sets of a logic program $\Pi$ is denoted by $\answersets{\Pi}$.
The \emph{consistency problem} of ASP, that is, to decide whether for a given
logic program $\Pi$ it holds that $\answersets{\Pi}\neq\emptyset$, is
\SIGMA{P}{2}-complete~\cite{amai:EiterG95}, and remains so also in the case
where doubly negated atoms are allowed in rule bodies~\cite{tplp:PearceTW09}.

\paragraph{Strong and Uniform Equivalence for Logic Programs.} Two logic
programs $\Pi_1 = (\calA, \calR_1)$ and $\Pi_2 = (\calA, \calR_2)$ are
\emph{equivalent} iff they have the same set of answer sets, that is,
$\answersets{\Pi_1} = \answersets{\Pi_2}$. The two logic programs are
\emph{strongly equivalent} iff for any third logic program $\Pi = (\calA,
\calR)$ it holds that the combined logic program $\Pi_1 \cup \Pi = (\calA,
\calR_1 \cup \calR)$ is equivalent to the combined logic program $\Pi_2 \cup \Pi
= (\calA, \calR_2 \cup \calR)$. They are \emph{uniformly equivalent} iff they
are strongly equivalent for any third program $\Pi$ consisting only of facts.
An \emph{SE-model} \cite{tplp:Turner03} of a logic program $\Pi = (\calA,
\calR)$ is a tuple of interpretations $(X, Y)$, where $X \subseteq
Y\subseteq\calA$, $Y \models \Pi$, and $X \models \Pi^Y$. The set of SE-models
of a logic program $\Pi$ is denoted $\semods{\Pi}$. Note that for every model
$Y$ of $\Pi$, $(Y, Y)$ is an SE-model of $\Pi$, since $Y \models \Pi$ implies
that $Y \models \Pi^Y$. An SE-model $(X, Y)$ of $\Pi$ is a \emph{UE-model} of
$\Pi$ \cite{iclp:EiterF03} iff either $X = Y$, or $X \subset Y$ and there is no
other SE-model $(X', Y) \in \semods{\Pi}$ such that $X \subset X' \subset Y$.
The set of UE-models of $\Pi$ is denoted $\uemods{\Pi}$. Hence, UE-models are
precisely those SE-models, where the $X$ component is either $Y$, or
subset-maximal w.r.t.\ the other SE-models.

Two logic programs (over the same set of atoms) are uniformly equivalent iff
they have the same UE-models and checking uniform equivalence is
\PI{2}{P}-complete in general \cite{iclp:EiterF03}.

\paragraph{Epistemic Logic Programs.} An \emph{epistemic literal} is a formula
$\eneg \ell$, where $\ell$ is a literal and $\eneg$ is the epistemic negation
operator. A \emph{ground epistemic logic program (ELP)} is a triple $\Pi =
(\calA, \calE, \calR)$, where $\calA$ is a set of propositional atoms, $\calE$
is a set of epistemic literals over the atoms $\calA$, and $\calR$ is a finite
set of \emph{ELP rules}, which are
\begin{equation*}
   a_1\vee \cdots \vee a_k \leftarrow \ell_1, \ldots, \ell_m, \xi_1, \ldots,
   \xi_j, \neg \xi_{j + 1}, \ldots, \neg \xi_{n},
\end{equation*}
where each $a_i\in\calA$ is an atom, each $\ell_i$ is a literal, and each $\xi_i
\in \calE$ is an epistemic literal.  Note that usually $\calE$ is defined
implicitly to be the set of all epistemic literals appearing in the rules
$\calR$; however, making the domain of epistemic literals explicit will prove
useful for our purposes.
%
%
The \emph{union} of two ELPs $\Pi_1 = (\calA_1, \calE_1, \calR_1)$ and $\Pi_2 =
(\calA_2, \calE_2, \calR_2)$ is the ELP $\Pi_1 \cup \Pi_2 = (\calA_1 \cup
\calA_2, \calE_1 \cup \calE_2, \calR_1 \cup \calR_2)$.

For a set $\calE$ of epistemic literals, a subset $\Phi \subseteq \calE$ of
epistemic literals is called an \emph{epistemic guess} (or, simply, a
\emph{guess}). The following definition provides a way to check whether a set of
interpretations is compatible with a guess~$\Phi$.

\begin{definition}\label{def:compatibility}
  Let $\calA$ be a set of atoms, $\calE$ be a set of epistemic literals over
  $\calA$, and $\Phi \subseteq \calE$ be an epistemic guess. A set $\calI$ of
  interpretations over $\calA$ is called \emph{$\Phi$-compatible w.r.t.\
  $\calE$}, iff 
  \begin{enumerate}
      \item\label{def:compatibility:1} $\calI \neq \emptyset$;
      \item\label{def:compatibility:2} for each epistemic literal $\eneg \ell
	\in \Phi$, there exists an interpretation $I \in \calI$ such that $I
	\not\models \ell$; and
      \item\label{def:compatibility:3} for each epistemic literal $\eneg \ell
	\in \calE \setminus \Phi$, for all interpretations $I \in \calI$ it
	holds that $I \models \ell$.
  \end{enumerate}
\end{definition}

For an ELP $\Pi = (\calA, \calE, \calR)$, the \emph{epistemic reduct}
\cite{ai:ShenE16} of the program $\Pi$ w.r.t.\ a guess $\Phi$, denoted $\Pi^\Phi$,
consists of the rules $\calR^\Phi=\{ r^\neg \mid r \in \calR \}$, where $r^\neg$ is defined
as the rule $r \in \calR$ where all occurrences of epistemic literals $\eneg
\ell \in \Phi$ are replaced by $\top$, and all remaining epistemic negation
symbols $\eneg$ are replaced by default negation $\neg$. Note that, after this
transformation, $\Pi^\Phi=(\calA,\calR^\Phi)$ is a logic program without epistemic
negation\footnote{In fact, $\Pi^\Phi$ may contain triple-negated atoms
$\neg\neg\neg a$. But, according to the definition of the GL-reduct in
\cite{amai:LifschitzTT99}, such formulas are equivalent to simple negated atoms
$\neg a$, and we treat them as such.}. This leads to the following, central
definition.

\begin{definition}\label{def:candidateworldview}
  Let $\Pi = (\calA, \calE, \calR)$ be an ELP. A set $\calM$ of interpretations
  over $\calA$ is a \emph{candidate world view (CWV)} of $\Pi$ if there is an
  epistemic guess $\Phi \subseteq \calE$ such that $\calM =
  \answersets{\Pi^\Phi}$ and $\calM$ is compatible with $\Phi$ w.r.t.\ $\calE$.
  The set of all CWVs of an ELP $\Pi$ is denoted by $\cwvs{\Pi}$. 
\end{definition}

Let us consider an example for illustrative purposes.

\begin{example}\label{ex:cwvs}
  Let $\calA=\{p,p'\}$, $\calE=\{\eneg \neg p\}$, $\Pi=(\calA,\calE,\calR)$ with
  $\calR$ containing only rule $p' \gets \neg \eneg \neg p$, a well-known
  formulation of the closed world assumption proposed in
  \cite{aaai:Gelfond91}\footnote{In fact, in \cite{aaai:Gelfond91}, the author
  proposes the rule $\sneg p \gets \neg \mathbf{M} p$, where $\sneg$ is a third
  kind of negation, usually referred to as strong negation, not considered in
  this paper. It can be simulated by replacing occurrences of $\sneg p$ by a
  fresh atom $p'$ and adding a constraint rule $\gets p, p'$ that excludes $p$
  and $p'$ to hold simultaneously.}.

  We obtain $\cwvs{\Pi}=\{ \{ \{p'\} \} \}$ as guess $\Phi=\emptyset$ is
  compatible with $\answersets{\Pi^{\Phi}=\{p' \gets \neg p\}} = \{\{p'\}\}$,
  while no other guesses are compatible with the answer sets of the respective
  epistemic reducts.
\end{example}

Following the principle of knowledge minimization, a \emph{world view}, in
\cite{ai:ShenE16}, is defined as follows.

\begin{definition}\label{def:worldview}
  Let $\Pi = (\calA, \calE, \calR)$ be an ELP. $\calC\in\cwvs{\Pi}$ is called
  \emph{world view (WV)} of $\Pi$ if its associated guess $\Phi$ is
  subset-maximal, i.e.\ there is no $\calC'\in\cwvs{\Pi}$ with associated guess
  $\Phi'\supset \Phi$.
\end{definition}

Note that in Example~\ref{ex:cwvs} there is only one CWV per program; hence
the associated guesses are subset-maximal, and the sets of CWVs and WVs
coincide.

Note that given two ELPs $\Pi_1 = (\calA_1, \calE_1, \calR_1)$ and $\Pi_2 =
(\calA_2, \calE_2, \calR_2)$, we can always assume that $\calA_1 = \calA_2$ and
$\calE_1 = \calE_2$ without changing the (candidate) world views of the two
programs \cite{aaai:FaberMW19}. In order to simplify our investigation, we will
make use of this assumption when we compare two ELPs.

One of the main reasoning tasks regarding ELPs is the \emph{world view existence
problem}, that is, given an ELP $\Pi$, decide whether a WV (or, equivalently,
CWV) exists. This problem is \SIGMA{3}{P}-complete \cite{ai:ShenE16}.

\section{Uniform Equivalence for ELPs}\label{sec:uniformequivalence}

In this section, we will investigate the uniform equivalence of ELPs, in
particular, focusing on how to extend this concept \cite{iclp:EiterF03} from
logic programs to ELPs. In order to begin this investigation, we will first
define (ordinary) equivalence of two ELPs. The following definition was recently
proposed in \cite{aaai:FaberMW19}.

\begin{definition}\label{def:equivalence}
  Two ELPs are \emph{WV-equivalent} (resp.\ \emph{CWV-equivalent}) iff their
  world views (resp.\ candidate world views) coincide.
\end{definition}

Note that CWV-equivalence immediately implies WV-equivalence.

We now continue by defining uniform equivalence for ELPs. One motivation for
such a kind of equivalence is module optimization: we would like to replace a
module in an ELP, that accepts a set of input facts and provides a set of output
facts, with another (hopefully more efficient) formulation without changing the
semantics (i.e.\ WVs or CWVs). Based on the two equivalence notions defined
above and using ideas from work done in the area of logic programs
\cite{iclp:EiterF03}, we propose two notions of uniform equivalence for ELPs.

\begin{definition}\label{def:uniformequivalence}
  Let $\Pi_1$ and $\Pi_2$ be two ELPs. $\Pi_1$ and $\Pi_2$ are
  \begin{itemize}
    \item \emph{uniformly WV-equivalent} iff, for every set of ground facts $D$,
      $\Pi_1 \cup D$ and $\Pi_2 \cup D$ are WV-equivalent; and
    \item \emph{uniformly CWV-equivalent} iff, for every set of ground facts $D$,
      $\Pi_1 \cup D$ and $\Pi_2 \cup D$ are CWV-equivalent.
  \end{itemize}
\end{definition}

One could be tempted to define uniform equivalence for ELPs simply in terms of
the UE-models \cite{iclp:EiterF03} of the epistemic reducts, for each possible
epistemic guess. However, this approach does not capture our intentions, as the
following example shows:

\begin{example}\label{ex:uemodelsnotenough}
  Take the two ELPs $\Pi_1$ and $\Pi_2$
  \[
  \begin{array}{l@{\qquad}l}
    \Pi_1 = (\calA, \calE, \calR_1) & \Pi_2 = (\calA, \calE, \calR_2)\\
    \calR_1 = \{ p \gets \eneg p \} & \calR_2 = \{ p \gets  \neg p \}
  \end{array}
  \]
  with $\calA = \{p\}$ and $\calE = \{ \eneg p \}$.
  Now, for the guess $\Phi_1 =
  \emptyset$, note that $\Pi_1^{\Phi_1} = \Pi_2^{\Phi_1}$ and hence, trivially,
  the UE-models are also the same. However, for the guess $\Phi_2 = \calE$,
  $\Pi_1^{\Phi_2}$ consists of the rule $p \gets \top$, while $\Pi_2^{\Phi_2}$
  reduces to $p \gets \neg p$. It can be checked that the UE-models of these two
  epistemic reducts w.r.t.\ $\Phi_2$ are not the same and are hence not
  uniformly equivalent in the sense of \cite{iclp:EiterF03}.  However, it turns
  out that the guess $\Phi_2$ can never give rise to a CWV, since it requires
  that there is an answer set not containing $p$, but both $\Pi_1^{\Phi_2}$ and
  $\Pi_2^{\Phi_2}$ require that $p$ is true in all answer sets of the CWV.
\end{example}

The example above implies that, when establishing uniform equivalence for ELPs,
we need a more involved construction. Before we turn to the subject of the
characterization, however, we will first investigate the relationship between
uniform CWV and WV-equivalence.

Clearly, it holds that uniform CWV-equivalence is the stronger notion, as it
directly implies uniform WV-equivalence. It can be shown that this relationship
is strict, and hence the two notions are actually distinct, as the following
proposition states:

\begin{proposition}\label{prop:uniformequivalence}
  Let $\Pi_1$ and $\Pi_2$ be two ELPs. It holds that
  \begin{enumerate}
    \item\label{prop:uniformequivalence:1} when $\Pi_1$ and $\Pi_2$ are
      uniformly CWV-equivalent, then they are uniformly WV-equivalent; and
    \item\label{prop:uniformequivalence:2} the ELPs $\Pi_1$ and $\Pi_2$ may be
      such that $\Pi_1$ and $\Pi_2$ are uniformly WV-equivalent but not
      uniformly CWV-equivalent.
  \end{enumerate}
\end{proposition}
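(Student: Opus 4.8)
The plan is to prove the two parts separately. Part~\ref{prop:uniformequivalence:1} is the easy direction and follows essentially from the definitions. Recall from Definition~\ref{def:equivalence} that CWV-equivalence immediately implies WV-equivalence, since world views are exactly those candidate world views whose associated guess is subset-maximal (Definition~\ref{def:worldview}); if two programs share all CWVs, then they share the associated guesses, and hence the subset-maximal ones coincide as well. To lift this to the uniform setting, I would fix an arbitrary set of ground facts $D$ and apply this observation to the combined programs $\Pi_1 \cup D$ and $\Pi_2 \cup D$: if these are CWV-equivalent for every $D$ (the hypothesis of uniform CWV-equivalence), then they are WV-equivalent for every $D$, which is precisely uniform WV-equivalence. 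This step is routine and requires no new machinery.

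Part~\ref{prop:uniformequivalence:2} is the substantive claim: the two notions are genuinely distinct, so I need an explicit pair of ELPs $\Pi_1$ and $\Pi_2$ that are uniformly WV-equivalent but differ on candidate world views for some choice of facts. The natural strategy is to exploit the one place where CWVs and WVs can come apart, namely when a program admits several candidate world views arising from different guesses, only the subset-maximal of which survive as world views. The idea is to construct $\Pi_1$ and $\Pi_2$ so that, for some input $D$, they agree on the subset-maximal guess (hence on all world views, for every $D$) but disagree on a non-maximal guess that still yields a compatible CWV for $\Pi_1$ but not for $\Pi_2$ (or vice versa). The running Closed World Assumption examples from Example~\ref{ex:running1} and Example~\ref{ex:cwvs}, built over atoms such as $p, p'$ with an epistemic literal like $\eneg \neg p$ or $\eneg p$, are the obvious candidates, since they are known to be WV-equivalent but not strongly equivalent, and strong inequivalence hints that some context (here, facts) can separate their finer-grained behaviour.

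The main obstacle is the construction and verification for Part~\ref{prop:uniformequivalence:2}. I would proceed by: (i) proposing concrete $\calR_1$ and $\calR_2$ over a common $\calA$ and $\calE$; (ii) showing uniform WV-equivalence by checking that, for every set of facts $D$, the subset-maximal compatible guesses and their associated answer-set collections agree for $\Pi_1 \cup D$ and $\Pi_2 \cup D$ --- this typically reduces to analysing the finitely many guesses $\Phi \subseteq \calE$ and, for each, comparing $\answersets{(\Pi_i \cup D)^\Phi}$ via Definition~\ref{def:candidateworldview} and Definition~\ref{def:compatibility}; and (iii) exhibiting a single witness $D$ together with a non-maximal guess $\Phi$ that is compatible with $\answersets{(\Pi_1 \cup D)^\Phi}$ but not with $\answersets{(\Pi_2 \cup D)^\Phi}$ (or whose compatibility status differs between the two), thereby producing a CWV of one combined program that is not a CWV of the other. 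Step~(ii) is the delicate part, because uniform WV-equivalence quantifies over \emph{all} sets of facts, so I cannot check finitely many cases directly; instead I would argue uniformly by a case distinction on whether the relevant atoms occur in $D$, observing that once the critical atom is forced by a fact the two programs behave identically, and that when it is not forced the subset-maximal guesses still coincide while only the non-maximal behaviour (invisible to world views) differs. Care is needed to ensure the separating CWV in step~(iii) indeed fails to be a world view, i.e.\ that its guess is strictly dominated, so that it does not spoil the WV-equivalence established in step~(ii).
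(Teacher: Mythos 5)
Your part~(\ref{prop:uniformequivalence:1}) is correct and is essentially the paper's own argument: the observation that CWV-equivalence implies WV-equivalence, applied to $\Pi_1 \cup D$ and $\Pi_2 \cup D$ for each fixed set of facts $D$.

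For part~(\ref{prop:uniformequivalence:2}), however, there is a genuine gap. The entire substance of this claim is an explicit separating pair of ELPs together with its verification, and your proposal does not supply one: your steps (i)--(iii) are a plan for finding such a pair, not a proof. Worse, the concrete candidate you name --- the two CWA formulations from Example~\ref{ex:running1} --- provably cannot work: as the paper itself computes in Example~\ref{ex:running2}, those two programs have identical UE-functions, hence are uniformly CWV-equivalent \emph{and} uniformly WV-equivalent, so they cannot separate the two notions. This also exposes your guiding heuristic as unsound: failure of \emph{strong} equivalence only guarantees a separating context consisting of \emph{rules}, not of facts, and the CWA pair is precisely a case where no set of facts distinguishes the programs. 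What is missing is the actual construction. The paper's witness uses $\calA = \{a,b,c,d\}$ and $\calE = \{\eneg \neg a, \eneg \neg b\}$, with a common part containing $a \vee b \gets \eneg\neg a, \eneg\neg b$, the loop $a \gets b$, $b \gets a$, and the rules $c \gets d, \neg a, \neg b$ and $d \gets c, \neg a, \neg b$; then $\Pi_1$ adds $c \vee d \gets \neg a, \neg b$, while $\Pi_2$ instead adds $c \gets \neg d, \neg a, \neg b$ and $d \gets \neg c, \neg a, \neg b$. For the subset-maximal guess $\calE$, both programs yield the world view $\{\{a,b\} \cup D\}$ for every $D$, giving uniform WV-equivalence, whereas for the dominated guess $\emptyset$ the two programs already disagree at $D = \emptyset$: one admits a CWV there and the other admits none. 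So your abstract mechanism (agreement on the maximal guess for all $D$, disagreement on a non-maximal guess, which is invisible to world views) is exactly the one the paper exploits, but without a working example and its case analysis the statement remains unproven.
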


\begin{proof}
  (\ref{prop:uniformequivalence:1}) As observed after
  Definition~\ref{def:equivalence}, if two ELPs are CWV-equivalent then they are
  WV-equivalent. This holds, in particular, for any set of facts $D$, and the
  ELPs $\Pi_1 \cup D$ and $\Pi_2 \cup D$.

  \smallskip\noindent (\ref{prop:uniformequivalence:2}) We will prove this by
  example. Take the ELPs $\Pi_1 = (\calA, \calE, \calR_1)$ and $\Pi_2 = (\calA,
  \calE, \calR_2)$ built as follows. Let $\calR$ be the following set of rules:
 \begin{align*}
  \calR = \{ a \vee b & \gets  \eneg \neg a, \eneg \neg b;\\
    a & \gets  b;\\
    b &\gets  a;\\
    c &\gets  d, \neg a, \neg b;\\
    d &\gets c, \neg a, \neg b \}
  \end{align*}
  Now, let
  \begin{align*}
    \calR_1 &= \calR \cup \{ c \vee d \gets \neg a, \neg b \}\\
    \calR_2 &= \calR \cup \{ c \gets \neg d, \neg a, \neg b;\, d \gets \neg c,
    \neg a, \neg b \}.
  \end{align*}
  We will show that $\Pi_1$ and $\Pi_2$ are uniformly WV-equivalent, but not
  uniformly CWV-equivalent.
  
  Note that $\calA = \{ a, b, c, d \}$ and $\calE = \{ \eneg \neg a, \eneg \neg
  b \}$. To prove our claim, let us first examine the first three rules of
  $\calR$. From these rules, it is not difficult to check that there are two
  epistemic guesses that lead to CWVs, namely $\Phi_1 = \emptyset$ and $\Phi_2 =
  \calE$. The CWV for $\Phi_2$ is the set $\{ \{ a, b \} \}$. Note that $\Phi_2$
  is subset-maximal, and hence this set is also a WV. Note further that adding
  any set of facts $D \subseteq \calA$ to $\Pi_1$ or $\Pi_2$ will simply change
  the WV to $\{ \{ a, b \} \cup D \}$, which is still a valid WV w.r.t.\ guess
  $\Phi_2$ for both ELPs.  However, the CWVs w.r.t.\ guess $\Phi_1$ differ
  already for $D = \emptyset$: for $\Pi_1 \cup D$ it is $\{ \{ a, b, c, d \}
  \}$, whereas for $\Pi_2 \cup D$ no CWV exists. Hence, we have that $\Pi_1$ and
  $\Pi_2$ are uniformly WV-equivalent, but not uniformly CWV-equivalent, as
  desired.
\end{proof}

The above result shows an interesting distinction between uniform equivalence
and strong equivalence when regarding ELPs. As shown in \cite{aaai:FaberMW19},
the different notions of strong equivalence considered therein coincide (that
is, regarding strong equivalence w.r.t.\ WVs or CWVs does not make a
difference), this is not the case for uniform equivalence, where there is an
actual distinction between uniform CWV- and uniform WV-equivalence.

A further observation that can be made is that both forms of uniform equivalence
for ELPs strictly generalize the notion of uniform equivalence for ASP, as the
following result shows.

\begin{theorem}\label{thm:generalization}
  Let $\Pi_1 = (\calA, \calR_1)$ and $\Pi_2 = (\calA, \calR_2)$ be two logic
  programs, and $\Pi_1' = (\calA, \calE, \calR_1)$ and $\Pi_2' = (\calA, \calE,
  \calR_2)$ be two ELPs containing the same rules, respectively, and where
  $\calE = \emptyset$. Then, the following three statements are equivalent:
  \begin{enumerate}
    \item\label{thm:generalization:1} $\Pi_1$ and $\Pi_2$ are uniformly
      equivalent,
    \item\label{thm:generalization:2} $\Pi_1'$ and $\Pi_2'$ are uniformly
      CWV-equivalent, and
    \item\label{thm:generalization:3} $\Pi_1'$ and $\Pi_2'$ are uniformly
      WV-equivalent.
  \end{enumerate}
\end{theorem}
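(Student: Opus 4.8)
The plan is to reduce all three statements to ordinary answer-set equality by exploiting the fact that, with an empty set of epistemic literals, the epistemic machinery degenerates. The first and central step would be to establish the following structural claim: for any logic program $\Pi = (\calA, \calR)$ and its associated ELP $\Pi' = (\calA, \emptyset, \calR)$, we have $\cwvs{\Pi'} = \{\answersets{\Pi}\}$ whenever $\answersets{\Pi} \neq \emptyset$, and $\cwvs{\Pi'} = \emptyset$ otherwise, with the WVs coinciding with the CWVs in both cases. The argument is immediate from the definitions: since $\calE = \emptyset$, the only epistemic guess is $\Phi = \emptyset$; the epistemic reduct $\Pi'^{\emptyset}$ is literally the logic program $\Pi$, as there are no epistemic literals to rewrite; conditions~(\ref{def:compatibility:2}) and~(\ref{def:compatibility:3}) of Definition~\ref{def:compatibility} hold vacuously, so $\emptyset$-compatibility reduces to the nonemptiness requirement~(\ref{def:compatibility:1}). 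Finally, as $\emptyset$ is the unique and hence subset-maximal guess, Definition~\ref{def:worldview} forces WVs and CWVs to coincide.

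Second, I would note that adding a set of ground facts $D$ introduces no epistemic literals, so $\Pi_i' \cup D$ still has an empty epistemic component and the structural claim applies verbatim to it and to the underlying program $\Pi_i \cup D$. In particular, WVs and CWVs coincide for each of $\Pi_1' \cup D$ and $\Pi_2' \cup D$ and every $D$, so uniform WV-equivalence and uniform CWV-equivalence are literally the same condition; this settles (\ref{thm:generalization:2}) $\Leftrightarrow$ (\ref{thm:generalization:3}) at once (and is also consistent with Proposition~\ref{prop:uniformequivalence}(\ref{prop:uniformequivalence:1})). It then remains to show (\ref{thm:generalization:1}) $\Leftrightarrow$ (\ref{thm:generalization:2}). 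For fixed $D$, the structural claim gives $\cwvs{\Pi_i' \cup D} = \{\answersets{\Pi_i \cup D}\}$ when $\answersets{\Pi_i \cup D} \neq \emptyset$ and $\cwvs{\Pi_i' \cup D} = \emptyset$ otherwise, from which CWV-equivalence of $\Pi_1' \cup D$ and $\Pi_2' \cup D$ is seen to hold exactly when $\answersets{\Pi_1 \cup D} = \answersets{\Pi_2 \cup D}$: equal answer-set collections give equal CWV-sets, while distinct ones give distinct CWV-sets (two different singletons, or a singleton against the empty set). Quantifying over all $D$ then yields equivalence with uniform equivalence of the logic programs in the sense of \cite{iclp:EiterF03}.

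The only subtlety --- and the step I would be most careful with --- is the boundary between ``no answer sets'' and ``a single empty answer set.'' A program whose answer-set collection is $\{\emptyset\}$ yields the nonempty CWV-set $\{\{\emptyset\}\}$, whereas a program with no answer sets yields the empty CWV-set $\emptyset$; compatibility condition~(\ref{def:compatibility:1}) is precisely what separates these two cases. I would therefore verify explicitly that CWV-equivalence still tracks answer-set equality in these boundary situations, after which the remainder is routine bookkeeping.
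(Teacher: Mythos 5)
Your proposal is correct and follows essentially the same route as the paper's own proof: both exploit that $\calE = \emptyset$ forces the unique guess $\Phi = \emptyset$, that the epistemic reduct then equals the underlying logic program (also after adding facts $D$), and that compatibility degenerates to nonemptiness, so CWVs and WVs collapse to the (nonempty) answer-set collection and all three notions reduce to $\answersets{\Pi_1 \cup D} = \answersets{\Pi_2 \cup D}$ for all $D$. Your explicit treatment of the boundary case $\answersets{\Pi_i \cup D} = \emptyset$ versus $\{\emptyset\}$ is a welcome touch of care that the paper's proof leaves implicit, but it does not constitute a different argument.
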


\begin{proof}
  Note that, since $\calE = \emptyset$, both ELPs $\Pi_1'$ and $\Pi_2'$ have at
  most one CWV (and hence WV), namely the set $\answersets{\Pi_1}$ and
  $\answersets{\Pi_2}$, respectively, in case these sets are non-empty.
  Otherwise, if $\Pi_1$ or $\Pi_2$ is inconsistent, then $\Pi_1'$ or $\Pi_2'$ do
  not have any CWVs, respectively.
  
  \paragraph{(\ref{thm:generalization:2}) $\Leftrightarrow$
  (\ref{thm:generalization:3}).} Since $\Pi_1'$ and $\Pi_2'$ have at most one
  CWV that corresponds to the guess $\Phi = \calE = \emptyset$, the notions of
  uniform WV-equivalence and uniform CWV-equivalence coincide.

  \paragraph{(\ref{thm:generalization:1}) $\Rightarrow$
  (\ref{thm:generalization:2}).} By assumption, $\Pi_1$ and $\Pi_2$ are
  uniformly equivalent. Towards a contradiction, assume that there is a set of
  facts $D \subseteq \calA$, such that $\Pi_1' \cup D$ is not CWV-equivalent to
  $\Pi_2' \cup D$. Note that, since the only epistemic guess possible is the
  guess $\Phi = \calE = \emptyset$, any non-empty set of answer sets satisfies
  $\Phi$. Further, note that for $i \in \{ 1, 2 \}$ it holds that $(\Pi_i' \cup
  D)^\Phi = \Pi_i'^\Phi \cup D = \Pi_i \cup D$. Hence, we have that
  $\answersets{\Pi_1 \cup D} \neq \answersets{\Pi_2 \cup D}$, contradicting our
  assumption.

  \paragraph{(\ref{thm:generalization:2}) $\Rightarrow$
  (\ref{thm:generalization:1}).} This follows from a similar argument as the one
  above.
\end{proof}

Having defined the notions of uniform equivalence for ELPs, we aim to
characterize it in a similar fashion as was done for strong equivalence for ELPs
in \cite{aaai:FaberMW19}, and for logic programs in \cite{tplp:Turner03}.
Unfortunately, it seems that an ``interesting'' characterization, as in these
two papers, is not possible for uniform equivalence of ELPs. Due to the complex
interactions between epistemic guesses, answer sets, and the sets of facts
added, only a very straightforward characterization of uniform equivalence for
ELPs is possible. We formulate this as a so-called UE-function, in the spirit of
the SE-function for strong equivalence as given in \cite{aaai:FaberMW19}.

\begin{definition}\label{def:uefunction}
  Let $\Pi = (\calA, \calE, \calR)$ be an ELP, and let $W \in \{ \text{CWV},
  \text{WV} \}$. Then, $\calU\calE^W_\Pi: 2^\calE \times 2^\calA \to 2^\calA$ is
  called the $W$-UE-function of $\Pi$ iff, for any epistemic guess $\Phi
  \subseteq \calE$ and any set of facts $D \subseteq \calA$, it holds that \[
    \calU\calE^W_\Pi(\Phi, D) = \left\{
    \begin{array}{@{}lr@{}}
      \calM &\qquad \text{if } \calM \text{ is a CWV of type } W \text{ of } \Pi
      \text{ w.r.t.\ } \Phi\\
      \emptyset &\qquad \text{otherwise,}
    \end{array}
    \right.
  \] where $\calM = \answersets{(\Pi \cup D)^\Phi}$.
\end{definition}

As can be seen, the characterization is rather straightforward (for a given ELP
$\Pi$, it maps an epistemic guess and a set of facts to the CWV or WV that
arises w.r.t.\ the guess when adding the set of facts to $\Pi$). The following
result follows immediately from the construction of the UE-function:

\begin{theorem}\label{thm:uefunction}
  For $W \in \{ \text{CWV}, \text{WV} \}$, two ELPs $\Pi_1$ and $\Pi_2$ are
  $W$-equivalent iff their $W$-UE-functions coincide.
\end{theorem}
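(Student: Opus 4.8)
The plan is to prove both directions by first isolating the single structural fact that makes the pointwise ``function'' characterization equivalent to the set-level notion of uniform equivalence: the guess associated with a candidate world view is uniquely determined by the world view itself. Concretely, I would first establish the following observation from Definition~\ref{def:compatibility}. Fix the common set $\calE$ of epistemic literals, and let $\calM$ be any non-empty set of interpretations over $\calA$. Since conditions~(\ref{def:compatibility:2}) and~(\ref{def:compatibility:3}) are complementary for each $\eneg\ell \in \calE$ (the statement ``there is $I \in \calM$ with $I \not\models \ell$'' is exactly the negation of ``$I \models \ell$ for all $I \in \calM$''), there is at most one guess $\Phi \subseteq \calE$ with which $\calM$ is compatible, namely $\Phi = \{ \eneg \ell \in \calE \mid \exists I \in \calM : I \not\models \ell \}$. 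Hence a given CWV $\calM$ of an ELP can arise from only one guess.

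For the direction from coinciding UE-functions to uniform $W$-equivalence, I would argue at the level of $D$. Fix any set of facts $D \subseteq \calA$. By Definition~\ref{def:uefunction} together with Definitions~\ref{def:candidateworldview} and~\ref{def:worldview}, the set of type-$W$ world views of $\Pi_i \cup D$ is exactly $\{ \calU\calE^W_{\Pi_i}(\Phi, D) \mid \Phi \subseteq \calE \} \setminus \{\emptyset\}$: every qualifying value $\answersets{(\Pi_i \cup D)^\Phi}$ is listed, and conversely the sentinel value $\emptyset$ is never a world view because condition~(\ref{def:compatibility:1}) forbids the empty set of interpretations. If the two functions coincide, these sets are equal for every $D$, so $\Pi_1 \cup D$ and $\Pi_2 \cup D$ are $W$-equivalent for every $D$, which is exactly uniform $W$-equivalence.

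For the converse, assume $\Pi_1$ and $\Pi_2$ are uniformly $W$-equivalent and fix a pair $(\Phi, D)$; it suffices to show $\calU\calE^W_{\Pi_1}(\Phi, D) = \calU\calE^W_{\Pi_2}(\Phi, D)$. Suppose the left-hand side equals some $\calM \neq \emptyset$. Then $\calM$ is a type-$W$ world view of $\Pi_1 \cup D$ arising from guess $\Phi$, and by Definition~\ref{def:candidateworldview} we have $\calM = \answersets{(\Pi_1 \cup D)^\Phi}$. By uniform $W$-equivalence, $\calM$ is also a type-$W$ world view of $\Pi_2 \cup D$; by the uniqueness observation it must arise from the same guess $\Phi$, whence $\calM = \answersets{(\Pi_2 \cup D)^\Phi}$ and $\calU\calE^W_{\Pi_2}(\Phi, D) = \calM$. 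The symmetric argument handles the case where the right-hand side is non-empty, and if both sides equal $\emptyset$ they agree trivially; this yields pointwise equality of the functions.

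The step I expect to be the crux is the uniqueness observation, since it is precisely what rules out the otherwise plausible scenario in which $\Pi_1$ and $\Pi_2$ realize the \emph{same} world view through \emph{different} guesses, which would let the set-level notion hold while the finer function-level notion fails. Once uniqueness is in hand, both directions are bookkeeping over Definition~\ref{def:uefunction}. For the $W=\text{WV}$ case one must additionally recall that subset-maximality of the guess is a property relative to the CWVs of $\Pi_i \cup D$; but since uniform $W$-equivalence transfers the full status ``type-$W$ world view via $\Phi$'' and the guess is fixed by $\calM$, maximality is inherited along with world-view status and needs no separate treatment.
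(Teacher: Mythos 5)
Your proof is correct. Note, however, that the paper itself offers no proof of this theorem: it merely asserts that the result ``follows immediately from the construction of the UE-function.'' Your writeup supplies exactly the details needed to back that claim, and the one ingredient in it that is not pure bookkeeping is your uniqueness observation: since conditions~(\ref{def:compatibility:2}) and~(\ref{def:compatibility:3}) of Definition~\ref{def:compatibility} are exact complements of each other, a non-empty set $\calM$ of interpretations is compatible with at most one guess, namely $\Phi = \{ \eneg \ell \in \calE \mid \exists I \in \calM : I \not\models \ell \}$. This is precisely what excludes the scenario in which the two programs realize the \emph{same} world view via \emph{different} guesses (which would make the set-level equivalence hold while the pointwise functions differ), and it is also what makes the phrase ``its associated guess'' in Definition~\ref{def:worldview} well defined---a fact the paper uses tacitly but never isolates. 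Your treatment of the $W=\text{WV}$ case is also right: uniform WV-equivalence transfers the full status ``WV of $\Pi_2 \cup D$,'' which already packages subset-maximality relative to the CWVs of $\Pi_2 \cup D$, and uniqueness pins the associated guess to the same $\Phi$, so no separate maximality argument is needed. One reading issue worth flagging: the theorem as printed says ``$W$-equivalent,'' but it must be read as ``uniformly $W$-equivalent'' (as you do); under the literal reading the claim would be false, since two ELPs can be ordinarily $W$-equivalent yet fail uniform $W$-equivalence (cf.\ Theorem~\ref{prop:comparison}), in which case their UE-functions differ at some non-empty $D$.
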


While this characterization thus is far less interesting than the
characterization for strong equivalence, it seems that the multiple layers
involved in computing world views of ELPs make a more interesting construction, which
tries to directly use UE-models from classical logic programming
\cite{iclp:EiterF03}, impossible. Further evidence of this will be presented in
the next section, where we investigate the computational complexity of deciding
uniform equivalence. While deciding strong equivalence for ELPs is
\co\NP-complete, we will see that the same task for uniform equivalence is much
harder, making it unlikely that an elegant compact representation, like the
UE-models proposed in \cite{iclp:EiterF03}, or the SE-function from
\cite{aaai:FaberMW19} can be found.

Before turning our attention to this topic, however, let us briefly return to
our example from Section~\ref{sec:introduction}. Recall that in
Example~\ref{ex:running1}, we have seen two versions to formulate the CWA using
ELPs. We shall investigate these two formulations w.r.t.\ their uniform
equivalence.

\begin{example}\label{ex:running2}
  It turns out that the two formulations of CWA shown in
  Example~\ref{ex:running1} are in fact both uniformly CWV-equivalent and
  uniformly WV-equivalent ELPs. This can be verified by constructing the
  relevant UE-functions. Let $\Pi_{\text{Gelfond}} = (\calA, \calE,
  \calR_{\text{Gelfond}})$ be the formulation of the CWA from
  \cite{aaai:Gelfond91}, and $\Pi_{\text{ShenEiter}} = (\calA, \calE,
  \calR_{\text{ShenEiter}})$ be the formulation from \cite{ai:ShenE16}, where
  $\calA = \{ p', p \}$ and $\calE = \{ \eneg p, \eneg \neg p \}$. Omitting all
  combinations of epistemic guesses and sets of facts where the UE-functions
  return $\emptyset$, the UE-functions, for both $W \in \{ \text{WV}, \text{CWV}
  \}$, look as follows:
  \begin{align*}
    \calU\calE^W_{\Pi_{\text{Gelfond}}} & =
    \calU\calE^W_{\Pi_{\text{ShenEiter}}} = \{\\
    & ( \{ \eneg p \}, \emptyset, \{ \{ p' \} \} ),\\
    & ( \{ \eneg p \}, \{ p' \}, \{ \{ p' \} \} ),\\
    & ( \{ \eneg \neg p \}, \{ p \}, \{ \{ p', p \} \} ),\\
    & ( \{ \eneg \neg p \}, \{ p', p \}, \{ \{ p', p \} \} )\\
    \}
  \end{align*}
  From this, since the UE-functions for the two formulations coincide, we
  observe that in the context of uniform equivalence, these two formulations
  are, for all intents and purposes, interchangeble.
\end{example}

With the above example, we are able to formally establish the following result,
comparing our notions of uniform equivalence for ELPs to established notions of
equivalence. We observe that, as expected, uniform equivalence is strictly
stronger than (ordinary) equivalence, but strictly weaker than strong
equivalence.

\begin{theorem}\label{prop:comparison}
  For ELPs, it holds that
  \begin{enumerate}
    \item\label{prop:comparison:1} strong equivalence is strictly stronger than
      uniform CWV-equivalence;
    \item\label{prop:comparison:2} uniform CWV-equivalence is strictly stronger
      than uniform WV-equivalence; 
    \item\label{prop:comparison:3} uniform CWV-equivalence is strictly stronger
      than (ordinary) CWV-equivalence;
    \item\label{prop:comparison:4} uniform WV-equivalence is strictly stronger
      than (ordinary) WV-equivalence.
  \end{enumerate}
\end{theorem}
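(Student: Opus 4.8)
The plan is to establish, for each of the four items, the implication (that the stronger notion entails the weaker one) together with its strictness (a witnessing pair where the weaker notion holds but the stronger fails). Item~\ref{prop:comparison:2} is already complete: both the implication and the strict separation are exactly the two parts of Proposition~\ref{prop:uniformequivalence}. For the three remaining implications, the arguments are immediate from the definitions. For item~\ref{prop:comparison:1}, strong equivalence requires equivalence under the addition of an arbitrary set of ELP \emph{rules}, and since every set of facts is a special case of a set of rules, strong equivalence directly entails uniform CWV-equivalence. For items~\ref{prop:comparison:3} and~\ref{prop:comparison:4}, I would instantiate the definition of uniform equivalence at $D = \emptyset$: uniform CWV-equivalence (resp.\ uniform WV-equivalence) demands CWV-equivalence (resp.\ WV-equivalence) of $\Pi_1 \cup D$ and $\Pi_2 \cup D$ for \emph{every} $D$, in particular for $D = \emptyset$, which is precisely ordinary CWV- (resp.\ WV-) equivalence.

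For the strictness parts I would exhibit a witness for each item. For item~\ref{prop:comparison:1}, I would reuse the two CWA formulations $\Pi_{\text{Gelfond}}$ and $\Pi_{\text{ShenEiter}}$ from Examples~\ref{ex:running1} and~\ref{ex:running2}: Example~\ref{ex:running2} verifies via their UE-functions that they are uniformly CWV-equivalent, whereas \cite{aaai:FaberMW19} shows they are not strongly equivalent. For item~\ref{prop:comparison:2}, the witness is the pair already constructed in the proof of Proposition~\ref{prop:uniformequivalence}(\ref{prop:uniformequivalence:2}).

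For items~\ref{prop:comparison:3} and~\ref{prop:comparison:4}, I would lift a standard ASP separation through Theorem~\ref{thm:generalization}. It is well known that for logic programs ordinary equivalence is strictly weaker than uniform equivalence; a minimal witness is the pair with rule sets $\{ a \gets b \}$ and $\emptyset$ over atoms $\{ a, b \}$, which share the unique answer set $\emptyset$ but diverge once the fact $b$ is added (yielding $\{ a, b \}$ versus $\{ b \}$). Embedding these as ELPs $\Pi_1'$ and $\Pi_2'$ with $\calE = \emptyset$, both have the single CWV (and hence WV) $\{ \emptyset \}$ at $D = \emptyset$, so they are ordinarily CWV- and WV-equivalent; yet by Theorem~\ref{thm:generalization} they are neither uniformly CWV-equivalent nor uniformly WV-equivalent, since the underlying logic programs are not uniformly equivalent. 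This single pair thus witnesses strictness for both item~\ref{prop:comparison:3} and item~\ref{prop:comparison:4}.

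I do not expect a genuine obstacle: all four implications reduce to the facts that a set of facts is a special set of rules and that $D = \emptyset$ is an admissible choice, while the strictness claims are covered either by examples already established in the paper (items~\ref{prop:comparison:1} and~\ref{prop:comparison:2}) or by a routine lifting of the classical ASP separation via Theorem~\ref{thm:generalization} (items~\ref{prop:comparison:3} and~\ref{prop:comparison:4}). The only point requiring mild care is confirming that, in the lifted example with $\calE = \emptyset$, the unique guess $\Phi = \emptyset$ yields exactly the answer sets of the underlying program as its sole CWV/WV; this is immediate from the form of the epistemic reduct and was already noted in the proof of Theorem~\ref{thm:generalization}.
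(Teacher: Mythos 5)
Your proposal is correct and follows essentially the same route as the paper's proof: the implications from the definitions (facts are a special case of rules; $D = \emptyset$ is an admissible choice), strictness of item~\ref{prop:comparison:1} via the CWA formulations of Examples~\ref{ex:running1} and~\ref{ex:running2}, item~\ref{prop:comparison:2} via Proposition~\ref{prop:uniformequivalence}, and items~\ref{prop:comparison:3} and~\ref{prop:comparison:4} by lifting an ASP separation through Theorem~\ref{thm:generalization}. The only difference is cosmetic: where the paper cites a known separating example from the ASP literature, you construct and verify an explicit witness ($\{a \gets b\}$ versus the empty program), which is a valid and self-contained substitute.
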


\begin{proof}
  Several observations follow trivially from the relevant definitions: strong
  equivalence implies uniform CWV-equivalence (since sets of atoms are also
  ELPs), which clearly implies uniform WV-equivalence (since WVs are a subset of
  CWVs). Finally, since uniform CWV-equivalence (resp.\ WV-equivalence) require
  that the CWVs (resp.\ WVs) are the same for any set of added atoms---in
  particular, the empty set of atoms---they directly imply (ordinary)
  CWV-equivalence (resp.\ WV-equivalence).

  To establish that the separations between these equivalence notions are indeed
  strict, we make use of several separating examples. Statement
  (\ref{prop:comparison:1}) is shown by Examples~\ref{ex:running1} and
  \ref{ex:running2}, which exhibit two ELPs that are uniformly CWV-equivalent,
  but not strongly equivalent. Statement (\ref{prop:comparison:2}) follows from
  statement (\ref{prop:uniformequivalence:2}) of
  Proposition~\ref{prop:uniformequivalence}. Finally, statements
  (\ref{prop:comparison:3}) and (\ref{prop:comparison:4}) follow from the fact
  that uniform CWV-equivalence and uniform WV-equivalence both generalize
  uniform equivalence for ground logic programs (cf.\
  Theorem~\ref{thm:generalization}), and, in this case, uniform equivalence is
  already strictly stronger than ordinary equivalence; see e.g.\
  \cite[Example~10]{iclp:EiterF03}.
\end{proof}

\section{Complexity of ELP Uniform Equivalence}\label{sec:complexity}

Having defined our characterization of uniform equivalence for ELPs, in this
section, we will now focus on the question of the computational complexity of
deciding whether two ELPs are uniformly equivalent. It turns out that this task
is of similar hardness as deciding the CWV existence problem for ELPs, that is,
on the third level of the polynomial hierarchy. Hence, it is one level higher in
the polynomial hierarchy than for plain ground (disjunctive) logic programs
under the stable model semantics, for which uniform-equivalence checking is
\PI{2}{P}-complete \cite[Theorem~10]{iclp:EiterF03}. The following result states
this formally:

\begin{theorem}\label{thm:complexity}
  Deciding uniform CWV-equivalence of two ELPs is \PI{3}{P}-complete.
\end{theorem}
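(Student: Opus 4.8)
The plan is to prove membership in \PI{3}{P} and \PI{3}{P}-hardness separately, in both cases invoking Theorem~\ref{thm:uefunction} to reduce uniform CWV-equivalence to equality of the two CWV-UE-functions.

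\emph{Membership.} I would show that the complementary problem, uniform CWV-non-equivalence, lies in \SIGMA{3}{P}. By Theorem~\ref{thm:uefunction}, $\Pi_1$ and $\Pi_2$ are not uniformly CWV-equivalent iff there exist a guess $\Phi \subseteq \calE$ and a set of facts $D \subseteq \calA$ with $\calU\calE^{\mathrm{CWV}}_{\Pi_1}(\Phi,D) \neq \calU\calE^{\mathrm{CWV}}_{\Pi_2}(\Phi,D)$. The algorithm existentially guesses $\Phi$ and $D$ (polynomially many bits) and then verifies inequality of the two values. Writing $\calM_i = \answersets{(\Pi_i \cup D)^\Phi}$, each value equals $\calM_i$ if $\calM_i$ is compatible with $\Phi$ (Definition~\ref{def:compatibility}) and $\emptyset$ otherwise, so the two values differ exactly when the compatibility bits differ, or both hold and $\calM_1 \neq \calM_2$. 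Since each $(\Pi_i \cup D)^\Phi$ is an ordinary disjunctive program, answer-set checking is in \co\NP, so each relevant test---nonemptiness of $\calM_i$, the existential condition~\ref{def:compatibility:2}, the universal condition~\ref{def:compatibility:3}, and the symmetric-difference test $\calM_1 \neq \calM_2$---is decidable with a single \SIGMA{2}{P} oracle call. Hence the verification runs in $\mathrm{P}^{\SIGMA{2}{P}} = \DELTA{3}{P}$, and prefixing the existential guess yields the bound \SIGMA{3}{P} for non-equivalence, i.e.\ \PI{3}{P} for uniform CWV-equivalence.

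\emph{Hardness.} For the lower bound I would reduce from the validity of a quantified Boolean formula $\Psi = \exists X \forall Y \exists Z\, \phi(X,Y,Z)$, which is \SIGMA{3}{P}-complete, to uniform CWV-non-equivalence (equivalently, one may reduce from the \SIGMA{3}{P}-complete CWV-existence problem of \cite{ai:ShenE16}). The design principle mirrors the quantifier structure of the characterization $\exists(\Phi,D)[\dots]$ obtained above: the leading existential block $\exists X$ is encoded jointly by the guessed facts $D$ and the epistemic guess $\Phi$; the middle universal block $\forall Y$ is encoded by the universal quantification over answer sets hidden in compatibility condition~\ref{def:compatibility:3}; and the innermost $\exists Z$ is absorbed into the \SIGMA{2}{P}-level reasoning (reduct minimization) of answer-set checking. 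Concretely, I would construct $\Pi_1$ to encode $\Psi$ and take $\Pi_2$ to be a reference ELP tailored so that, for every $(\Phi,D)$, the two UE-function values differ precisely when the assignment $X$ read off from $(\Phi,D)$ witnesses $\forall Y \exists Z\,\phi$; the two programs are then uniformly CWV-equivalent iff no such witness exists, i.e.\ iff $\Psi$ is false.

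\emph{Main obstacle.} The principal difficulty lies entirely in the hardness construction. Because uniform equivalence quantifies over \emph{all} fact sets $D \subseteq \calA$, I must ensure that fact sets and epistemic guesses not corresponding to well-formed truth assignments of $X$ cannot spuriously create or destroy CWVs and thereby break equivalence for the wrong reason; this calls for guard atoms and saturation/spoiler rules forcing every ill-formed $(\Phi,D)$ to yield identical (typically empty) UE-function values in both programs, so that only the intended assignments are discriminating. A second subtlety is that the comparison must be engineered so that the failing test is exactly compatibility condition~\ref{def:compatibility:3}, whose $\forall$-over-answer-sets character supplies the extra alternation pushing hardness from the second to the third level; verifying that the minimization inherent in the CWV definition does not inadvertently collapse this alternation is the step I expect to require the most care.
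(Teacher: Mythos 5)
Your membership argument is sound and is essentially the paper's: the paper also shows the complement is in \SIGMA{3}{P} by guessing $D$, $\Phi$, and a distinguishing interpretation $M$, and then verifying with a \SIGMA{2}{P}\ oracle that either exactly one of $(\Pi_1\cup D)^\Phi$, $(\Pi_2\cup D)^\Phi$ yields a CWV w.r.t.\ $\Phi$, or both do but $M$ is an answer set of exactly one of them (the paper quotes the \DP{2}{P}\ bound of Shen and Eiter for CWV-checking w.r.t.\ a fixed guess, whereas you rederive it from the compatibility conditions; both are fine).

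The hardness part, however, has a genuine gap: you never produce the construction. Saying that $\Pi_1$ ``encodes $\Psi$'' and that $\Pi_2$ is ``a reference ELP tailored so that'' the UE-function values differ exactly on witnessing assignments is a restatement of what must be proved, not a proof; and the difficulties you defer in your ``main obstacle'' paragraph---controlling the effect of \emph{arbitrary} fact sets $D$---are precisely the technical content of the theorem. The paper resolves them as follows: $\Pi_1$ is the existing Shen--Eiter reduction from 3-QBF to CWV existence (so $\Pi_1$ has a CWV iff the QBF is satisfiable), and $\Pi_2$ is the \emph{same} program with the clause-checking rules trivialized, so that $\relation{sat}$ is a fact and $\Pi_2$ never has a CWV. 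If the QBF is satisfiable, $D=\emptyset$ already separates the programs. If it is unsatisfiable, a case analysis over the atoms of $\calA$ shows that no $D$ can separate them: adding $\relation{sat}$ erases the only syntactic difference; adding $\relation{false}$ or any $y,\overline{y}$ with $y\in\varsY$ destroys all answer sets of every epistemic reduct of \emph{both} programs via the constraints; and any $D$ over the $\varsX,\varsZ$ atoms leaves both programs CWV-free because unsatisfiability of the QBF forces $\relation{sat}$ to be derived regardless of the guess. Note also that this contradicts your design intuition that the block $\exists\varsX$ should be ``encoded jointly by $D$ and $\Phi$'': in the paper's construction $D$ encodes nothing at all---the $\varsX$-assignment lives entirely in the epistemic guess $\Phi$---and it is exactly this choice that makes guard atoms and spoiler rules unnecessary, since every possible added fact is shown to be either neutral or fatal to both programs simultaneously. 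A reduction in which $D$ carries part of the assignment would make the well-formedness problem you worry about strictly harder, and you give no mechanism for solving it.
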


\begin{proof}
  For this proof, assume that $\Pi_1 = (\calA, \calE, \calR_1)$ and $\Pi_2 =
  (\calA, \calE, \calR_2)$ are two ELPs (w.l.o.g.\ over the same set of atoms
  and epistemic literals).

  \paragraph{Upper Bound.} As stated in \cite[Theorem~4]{ai:ShenE16}, given an
  epistemic guess $\Phi$ and an ELP $\Pi$, verifying that $\Pi$ has a CWV
  w.r.t.\ $\Phi$ can be done in \DP{2}{P}, and hence via two calls to a
  \SIGMA{2}{P}\ oracle. We therefore obtain a straightforward guess-and-check
  algorithm that runs in non-deterministic polynomial time with a \SIGMA{2}{P}\
  oracle, checks non-uniform equivalence between two ELPs, and works as follows:
  guess a set of atoms $D \subseteq \calA$, an epistemic guess $\Phi \subseteq
  \calE$, and a set of facts $M \subseteq \calA$. Then, use a
  \SIGMA{2}{P}-oracle to check that one of the following two conditions hold:
  (i) $\Phi$ leads to a CWV for $\Pi_1 \cup D$, but not for $\Pi_2 \cup D$, or
  (ii) $\Phi$ leads to a CWV for both $\Pi_1 \cup D$ and $\Pi_2 \cup D$, but
  that $M$ is an answer set that exists only in exactly one of these two CWVs.

  \paragraph{Lower Bound.} We will show \PI{3}{P}-hardness via reduction from
  3-QBF solving. We will construct two ELPs $\Pi_1$ and $\Pi_2$ such that they
  are uniformly equivalent iff a given 3-QBF is unsatisfiable. To this end, we
  will make use of the reduction from 3-QBF solving to CWV existence offered in
  \cite[Proof of Theorem~5]{ai:ShenE16}, on which our reduction is based.  Let
  $\exists \varsX \forall \varsY \exists \varsZ \; \Psi(\varsX, \varsY, \varsZ)$
  be a 3-QBF formula in conjunctive normal form, where each clause has the form
  $\ell_1 \vee \ell_2 \vee \ell_3$, where each $\ell$ is a literal over the
  variables in $\varsX \cup \varsY \cup \varsZ$.  In \cite{ai:ShenE16}, it is
  assumed w.l.o.g.\ that the 3-QBF evaluates to true whenever all variables in
  $\varsY$ are replaced by $\top$. This does not change the hardness of the
  problem, and we make use of the same assumption. For a 3-QBF formula as above,
  we construct the ELP $\Pi_1 = (\calA, \calE, \calR_1)$ over the atoms $\calA =
  \{ w, \overline{w} \mid w \in \varsX \cup \varsY \cup \varsZ \} \cup \{
  \relation{false}, \relation{sat} \}$ using the well-known technique of
  saturation \cite{amai:EiterG95}. $\Pi_1$ contains the following set of rules,
  where $\ell^*$ converts a literal $a$ into atom $\overline{a}$ and literal
  $\neg a$ into atom $a$:
  %
    %
    \begin{itemize}
      \item for each $x \in \varsX$:
	\begin{align}
	  x & \gets \eneg \overline{x}, \label{qbf:1}\\
	  \overline{x} & \gets \eneg x; \label{qbf:2}
	\end{align}
      \item for each $y \in \varsY$:
	\begin{align}
	  y & \gets \neg \overline{y}, \label{qbf:3}\\
	  \overline{y} & \gets \neg y, \label{qbf:4}\\
	  \bot & \gets \neg \eneg y, \label{qbf:5}\\
	  \bot & \gets \neg \eneg \overline{y}; \label{qbf:6}
	\end{align}
      \item for each $z \in \varsZ$:
	\begin{align}
	  z \vee \overline{z} & , \label{qbf:7}\\
	  z & \gets \relation{sat}, \label{qbf:8}\\
	  \overline{z} & \gets \relation{sat}; \label{qbf:9}
	\end{align}
      \item for each clause $\ell_1 \vee \ell_2 \vee \ell_3$ in $\Psi$: 
	\begin{align}
	  \relation{sat} & \gets \ell_1^*, \ell_2^*, \ell_3^*; \label{qbf:10}
	\end{align}
      \item and the two rules
	\begin{align}
	  \relation{false} & \gets \eneg \relation{false}, \eneg \neg
	  \relation{sat}, \label{qbf:11}\\
	  \bot & \gets \neg \eneg \relation{false}. \label{qbf:12}
	\end{align}
    \end{itemize}
    %
    %
    
  The construction of $\Pi_2 = (\calA, \calE, \calR_2)$ differs from $\Pi_1$ in
  only one respect: for $\Pi_2$, $\ell^*$ converts literals into $\top$.  Note
  that, therefore, $\Pi_2$ contains the fact $\relation{sat}$. This completes
  the main part of our construction. Let us now explore how our construction
  works. From \cite{ai:ShenE16}, we have that program $\Pi_1$ has a CWV iff the
  3-QBF $\exists \varsX \forall \varsY \exists \varsZ \Psi$ is satisfiable, and
  hence, conversely, $\Pi_1$ has no CWVs iff the QBF is unsatisfiable (since, in
  this case, in the GL-reduct, the atom $\relation{sat}$ is always derived, and
  hence, constraint~(\ref{qbf:11}) destroys any potential CWV). Note that any
  CWV $\calM$ for $\Pi_1$ has the following structure: a guess $\Phi$ leading to
  $\calM$ will contain a subset of $\{ \eneg x, \eneg \overline{x} \mid x \in
  \varsX \}$ representing an assignment on the variables of $\varsX$. Further,
  $\Phi \supseteq \{ \eneg y, \eneg \overline{y} \mid y \in \varsY \}$, since
  each answer set in the CWV represents precisely one assignment on the
  variables $\varsY$, and all possible such assignments must appear in the CWV.
  Finally, $\eneg \relation{false} \in \Phi$ and $\eneg \neg \relation{sat}
  \not\in \Phi$, via constraint~(\ref{qbf:11}). For the precise reasoning behind
  this construction, please see \cite[Proof of Theorem~5]{ai:ShenE16}.
  
  Towards our goal, we must show two things: (a) in cases where the 3-QBF
  $\exists \varsX \forall \varsY \exists \varsZ \Psi$ is satisfiable, there
  exists a set of atoms $D \subseteq \calA$, such that $\Pi_1 \cup D$ and $\Pi_2
  \cup D$ are not equivalent (i.e.\ have differing CWVs), and hence, $\Pi_1$ and
  $\Pi_2$ are not uniformly equivalent; and (b) in cases where the 3-QBF is
  unsatisfiable for all sets of facts $D \subseteq \calA$ it holds that $\Pi_1
  \cup D$ is equivalent to $\Pi_2 \cup D$ (i.e.\ they have the same CWVs), and
  hence, $\Pi_1$ and $\Pi_2$ are uniformly equivalent.

  Showing (a) is straightforward: simply take $D = \emptyset$. $\Pi_1$ has a CWV
  (via correctness of the reduction in \cite{ai:ShenE16} as explained above),
  whereas $\Pi_2$ does not have a CWV, since the atom $\relation{sat}$ is always
  derived in any GL-reduct of $\Pi_2$, destroying each potential CWV.  Showing
  (b) is a little more involved. We will show this by contradiction. To this
  end, assume that the 3-QBF is unsatisfiable, but some set of facts $D
  \subseteq \calA$ exists, such that $\Pi_1 \cup D$ and $\Pi_2 \cup D$ are not
  equivalent. Our plan is to show that $D$ cannot contain any atoms from
  $\calA$, but also cannot be empty. Let us look at the atoms in $\calA$ in
  turn.
  \begin{description}
    \item[$\relation{sat} \in D$:] in this case, the only difference between
      $\Pi_1$ and $\Pi_2$, namely rules of the form~(\ref{qbf:10}), disappears,
      and hence, $\Pi_1 \cup D$ and $\Pi_2 \cup D$ cannot have differing CWVs; a
      contradiction.
    \item[$\relation{false} \in D$:] in this case, the atom $\relation{false}$
      is true in every answer set, regardless of the epistemic guess $\Phi$, in
      both $\Pi_1 \cup D$ and $\Pi_2 \cup D$. Hence, $\eneg \relation{false}
      \not\in \Phi$. But then, rule~(\ref{qbf:12}) becomes $\bot \gets \neg \neg
      \relation{false}$ in the epistemic reduct w.r.t.\ $\Phi$, and no answer
      set can both contain $D$ and satisfy this constraint; a contradiction.
    \item[$\{ y, \overline{y} \mid y \in \varsY \} \cap D \neq \emptyset$:] this
      case is similar to the case of $\relation{false}$. If any such atom $y$ or
      $\overline{y}$ is in $D$, and hence true in every answer set of any
      epistemic reduct, then constraints~(\ref{qbf:5}) and~(\ref{qbf:6}) will
      prevent that epistemic reduct from having any answer sets for both $\Pi_1$
      and $\Pi_2$; a contradiction.
    \item[$D \subseteq \{ w, \overline{w} \mid w \in \varsX \cup \varsZ \}$:]
      since the 3-QBF is unsatisfiable, we know that for any assignment on the
      variables in $\varsX$ and $\varsZ$ there is an assignment on the variables
      in $\varsY$ such that $\Psi$ is false. Since, from the last paragraph, we
      know that $D \cap \{ y, \overline{y} \mid y \in \varsY \} = \emptyset$, we
      have that whatever assignment on the variables $\varsX$ and $\varsZ$ is
      fixed via the atoms in $D$ (in particular, also when $D = \emptyset$),
      there will always be an assignment on the variables $\varsY$, represented
      by the atoms $\{ y, \overline{y} \mid y \in \varsY \}$, such that the atom
      $\relation{sat}$ will be derived in the GL reduct of $\Pi_1 \cup D$,
      irrespective of the guess $\Phi$, and hence the assignment on the
      variables $\varsX$. Hence, again, $\Pi_1 \cup D$ and $\Pi_2 \cup D$ have
      no CWVs; a contradiction.
  \end{description}
  We thus have that $D$ cannot be empty, but also cannot contain any atoms from
  $\calA$, and hence, cannot exist. Since, by construction, all CWVs of $\Pi_1$
  and $\Pi_2$ are also WVs (as the respective epistemic guesses are never in a
  subset-relationship), the above holds for both uniform CWV- and uniform
  WV-equivalence. This concludes the proof.
\end{proof}

From the proof of the above theorem, we immediately obtain the following
statement for uniform WV-equivalence, which follows from the fact that our
lower-bound construction employs an encoding for 3-QBF where the set of WVs and
CWVs always coincide, and hence, the two ELPs $\Pi_1$ and $\Pi_2$ in this
construction are uniformly WV-equivalent iff they are uniformly CWV-equivalent.

\begin{theorem}
  Deciding uniform WV-equivalence for two ELPs is \PI{3}{P}-hard.
\end{theorem}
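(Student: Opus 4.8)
The plan is to recycle, essentially verbatim, the \PI{3}{P}-hardness reduction constructed in the lower-bound part of the proof of Theorem~\ref{thm:complexity}. That reduction takes a 3-QBF $\exists \varsX \forall \varsY \exists \varsZ \; \Psi$ and produces the two ELPs $\Pi_1$ and $\Pi_2$ over the same atoms and epistemic literals, establishing that $\Pi_1$ and $\Pi_2$ are uniformly CWV-equivalent if and only if the formula is unsatisfiable. Since deciding satisfiability of such $\exists\forall\exists$-QBFs is \SIGMA{3}{P}-complete, its complement (unsatisfiability) is \PI{3}{P}-complete, so the reduction already witnesses \PI{3}{P}-hardness of uniform CWV-equivalence. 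The entire task here is to argue that the same reduction, with the same $\Pi_1$ and $\Pi_2$, also decides uniform WV-equivalence.

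The key step is to observe that for this particular construction the notions of uniform WV-equivalence and uniform CWV-equivalence collapse into one. This rests on the structural property already extracted in the proof of Theorem~\ref{thm:complexity}: for both $\Pi_1$ and $\Pi_2$, every epistemic guess $\Phi$ giving rise to a CWV has the fixed shape $\Phi \supseteq \{ \eneg y, \eneg \overline{y} \mid y \in \varsY \}$, contains $\eneg \relation{false}$ but not $\eneg \neg \relation{sat}$, and restricts to an \emph{assignment} on $\varsX$, selecting for each $x \in \varsX$ exactly one of the complementary pair $\{ \eneg x, \eneg \overline{x} \}$. Consequently all admissible guesses have identical cardinality and none can be a proper subset of another; that is, every guess yielding a CWV is automatically subset-maximal. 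By Definition~\ref{def:worldview}, this makes every CWV a WV, so the CWVs and WVs of $\Pi_1$ (and of $\Pi_2$) coincide.

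The one point requiring care is that this coincidence must survive the addition of the facts $D$ quantified over in the definition of uniform equivalence, not merely hold for $\Pi_1$ and $\Pi_2$ on their own. Here the case analysis from the proof of Theorem~\ref{thm:complexity} does the work: every nonempty $D \subseteq \calA$ was shown either to render $\Pi_1 \cup D$ and $\Pi_2 \cup D$ trivially CWV-equivalent (when $\relation{sat} \in D$) or to destroy all CWVs of both programs (in every remaining case), while for $D = \emptyset$ the guesses retain the subset-incomparable shape described above. Hence for each relevant $D$ the CWVs of $\Pi_i \cup D$ are again all WVs, and the assertion ``$\Pi_1 \cup D$ and $\Pi_2 \cup D$ have the same CWVs for every $D$'' is literally the same as ``$\Pi_1 \cup D$ and $\Pi_2 \cup D$ have the same WVs for every $D$''.

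With this identification, the reduction proving that uniform CWV-equivalence is equivalent to unsatisfiability of the 3-QBF simultaneously proves that uniform WV-equivalence is equivalent to unsatisfiability of the 3-QBF. Therefore deciding uniform WV-equivalence is \PI{3}{P}-hard, as claimed. I expect the subset-incomparability argument of the second paragraph to be the only genuinely delicate point; everything else is inherited directly from Theorem~\ref{thm:complexity}.
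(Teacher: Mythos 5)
Your proposal is correct and follows essentially the same route as the paper: the paper likewise reuses the lower-bound construction of Theorem~\ref{thm:complexity} and observes that, by construction, the epistemic guesses yielding CWVs are never in a subset relationship, so CWVs and WVs coincide and the reduction decides uniform WV-equivalence as well. Your explicit check that this coincidence persists under every added fact set $D$ just spells out what the paper leaves as a parenthetical remark at the end of that proof.
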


Note, however, that our upper bound construction does not give an upper bound
for uniform WV-equivalence, since verifying that some CWV is a WV is
\PI{3}{P}-hard itself \cite{ai:ShenE16}.

\section{Conclusions}\label{sec:conclusions}

In this paper, we have defined and studied the notion of uniform equivalence for
epistemic logic programs. Programs are uniformly WV- or CWV-equivalent if they
yield the same world views or candidate world views, respectively. In contrast
to strong equivalence for ELPs, the two notions (for WV and CWV) do not
coincide, but they generalize uniform equivalence for standard logic programs
interpreted using the Answer Set semantics. We also provided a characterization
of both notions of uniform equivalence by means of a UE-function, in the spirit
of the SE-function of \cite{aaai:FaberMW19}. However, unlike the SE-function
this characterization is relatively straightforward and provides only little
insight into the problem. While this reduces the potential impact of the
characterization, it appears that one cannot do better. In fact, we show that
deciding uniform equivalence on ELPs is at least \PI{3}{P}-hard and thus
probably much harder than deciding strong equivalence on ELPs, which is
\co\NP-complete. This result provides a further indication that a more compact
representation of the UE-function is unlikely to exist.

For future work, it would be interesting to see whether other forms of
equivalences between ELPs exist that are less restrictive than strong
equivalence but more restrictive than uniform equivalence, and, ideally, for
which the decision problem also lies between the respective complexities of
deciding uniform and strong equivalence.

\paragraph*{Acknowledgements} Wolfgang Faber and Michael Morak were supported by
the S\&T Cooperation CZ 05/2019 ``Identifying Undoable Actions and Events in
Automated Planning by Means of Answer Set Programming.'' Stefan Woltran was
supported by the Austrian Science Fund (FWF) under grant number Y698.

\bibliographystyle{acmtrans}
\bibliography{references}

\end{document}